\newtheorem{theorem}{Theorem}[section]
\newtheorem{lemma}[theorem]{Lemma}
\newtheorem{corollary}[theorem]{Corollary}
\newtheorem{conjecture}[theorem]{Conjecture}
\theoremstyle{definition}
\newtheorem{remark}[theorem]{Remark}
\newtheorem{example}[theorem]{Example}
\newtheorem{definition}[theorem]{Definition}
\newtheorem{observation}{Observation}
\newcommand{\ZN}{\ensuremath{\mathbb{Z}}}
\newcommand{\QN}{\ensuremath{\mathbb{Q}}}
\newcommand{\RN}{\ensuremath{\mathbb{R}}}
\newcommand{\CN}{\ensuremath{\mathbb{C}}}
\newcommand{\fd}[1]{\ensuremath{\mathbb{#1}}}
\newtype{\class}{\mathbf}
\newcommand{\sh}[1]{\class<\Sigma^p_{#1}>}
\newcommand{\NPZ}{\class<\ensuremath{\bm{\exists \ZN}}>}
\newcommand{\NPQ}{\class<\ensuremath{\bm{\exists \QN}}>}
\newcommand{\NPR}{\class<\ensuremath{\bm{\exists \RN}}>}
\newcommand{\NPC}{\class<\ensuremath{\bm{\exists \CN}}>}
\newcommand{\K}{\ensuremath{\emptyset'}}
\DeclareMathOperator{\rk}{\mathrm{rank}}
\DeclareMathOperator{\GF}{\mathrm{GF}}
\DeclareMathOperator{\mrk}{\mathrm{minrank}}
\DeclareMathOperator{\diag}{\mathit{diag}}
\DeclareMathOperator{\ETh}{\mathrm{ETh}}
\newcommand{\HTN}{\textsf{\upshape H$_2$N}}
\def\itemi{\item [$(i)$]}
\def\itemii{\item [$(ii)$]}
\newcommand{\llb}{(}
\newcommand{\rrb}{)}
\begin{document}

\title{The Complexity of Tensor Rank}
\author{
{Marcus Schaefer
} \\
{\small School of Computing} \\[-0.13cm]
{\small DePaul University} \\[-0.13cm]
{\small Chicago, Illinois 60604, USA} \\[-0.13cm]
{\small \tt mschaefer@cdm.depaul.edu}\\[-0.13cm]
\and
{Daniel \v{S}tefankovi\v{c}
} \\
{\small Computer Science Department} \\[-0.13cm]
{\small University of Rochester} \\[-0.13cm]
{\small Rochester, NY 14627-0226} \\[-0.13cm]
{\small \tt stefanko@cs.rochester.edu}\\[-0.13cm]
}

\date{}

\maketitle

\begin{abstract}
 We show that determining the rank of a tensor over a field has the same complexity as deciding the existential theory of that field. This implies earlier \NP-hardness results by H{\aa}stad~\cite{H90}. The hardness proof also implies an algebraic universality result.
\end{abstract}


\section{Introduction}

As computer scientists we can think of tensors as
multi-dimensional arrays; $2$-dimensional tensors correspond to
(traditional) matrices, and a $3$-dimensional tensor can be
written as $T = (t_{i,j,k}) \in \fd{F}^{d_1 \times d_2 \times
d_3}$. We will work over various fields, including $\QN$, $\RN$,
and $\CN$, as well as $\GF_p$. The rank of a matrix $M$ (over some
field $\fd{F}$) can be defined as the smallest $k$ so that $M$ is
the sum of $k$ matrices of rank $1$, where a matrix of rank $1$ is
a matrix that can be written as $x \otimes y$, where $x$ and $y$
are one-dimensional vectors (over $\fd{F}$), and $\otimes$ is the
Kronecker (tensor, outer) product. The rank of a tensor can be
defined similarly: a $3$-dimensional tensor $T$ has {\em (tensor)
rank} at most $k$ (over $\fd{F}$) if it is the sum of at most $k$
rank-$1$ tensors, where a rank-$1$ tensor is a tensor of the form
$x \otimes y \otimes z$ (over $\fd{F}$).

H{\aa}stad~\cite{H90} showed that determining the tensor rank over
$\QN$ is an \NP-hard problem; as Hillar and Lim~\cite{HL13} point
out, his proof can be (mildly) adjusted to yield that the tensor
rank problem remains \NP-hard over \RN\ and \CN; this is not
immediate, since, tensor rank can vary depending on the underlying
field (this is a well known fact; we will also see an example
later on). This may suggest that tensor rank problems are equally
intractable. Our goal in this paper is to show that this is not
the case, and that the complexity of the tensor rank problem
ranges wildly, as we consider different underlying fields.

For a field $\fd{F}$, let $\ETh(\fd{F})$ be the set of true
existential first-order statements over $\fd{F}$, sometimes known
as the {\em existential theory of $\fd{F}$}. For example, letting
$\varphi(c) := (\exists x)[x^2 = c]$, we have that $\varphi(2)
\not\in \ETh(\QN)$,  but $\varphi(2) \in \ETh(\RN), \ETh(\CN)$,
and $\varphi(-1) \not\in \ETh(\QN), \ETh(\RN)$, and $\varphi(-1)
\in \ETh(\CN)$. Our main result is that the tensor rank problem
over $\fd{F}$ is polynomial-time equivalent to the existential
theory of $\fd{F}$.

\begin{theorem}\label{thm:trans}
 Let $\fd{F}$ be a field. Given a statement $\varphi$ in $\ETh(\fd{F})$, the existential theory of $\fd{F}$, we can in polynomial time construct a tensor $T_\varphi$ and an integer $k$ so that $\varphi$ is true over $\fd{F}$ if and only if $T$ has tensor rank at most $k$ over $\fd{F}$.
\end{theorem}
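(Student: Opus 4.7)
The plan is to reduce $\ETR(\fd{F})$ to tensor rank in three stages: normalize the existential sentence to a system of very simple algebraic constraints, build a tensor whose low-rank decompositions correspond exactly to solutions, and then verify both directions of the equivalence in a field-independent manner.

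First, I would put $\varphi$ into a convenient normal form. Standard substitution tricks (naming intermediate subterms with fresh variables) reduce $\varphi$ to the feasibility of a system $S$ in variables $x_1,\dots,x_n$ whose only atomic constraints are of the three types $x_i = c$, $x_i + x_j = x_k$, and $x_i x_j = x_k$. Strict inequalities and disequations arising from ordered subfields of $\RN$ can be absorbed into equations by introducing auxiliary variables in the standard way (e.g.\ $x \ne 0$ becomes $xy = 1$; $x \ge 0$ over $\RN$ becomes $x = y^2$). This stage is purely syntactic and runs in polynomial time.

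Second, I would construct $T_\varphi$ as a direct sum of local gadgets, one per constraint of $S$, linked through shared ``variable slots.''  The construction should enforce two structural features.  A large \emph{skeleton} part of $T_\varphi$ whose matrix slices have very high rank forces all but a controlled number of the rank-one summands of any decomposition into prescribed positions, consuming a known budget of rank.  A small \emph{workspace} part, shared across the gadgets, then has the property that its remaining entries can be filled in by the leftover rank-one summands only when the vectors encoding the $x_i$'s satisfy the constraints; a product gadget for $x_ix_j=x_k$, for instance, can be built from a tensor block whose only rank-one extension consists of vectors of the form $(1,x_i) \otimes (1,x_j) \otimes (-x_k,1)$, and addition and constant gadgets are similar. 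The rank target $k$ is tuned so that $T_\varphi$ has rank at most $k$ exactly when every gadget can be completed simultaneously.

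Third, the forward direction is immediate: any witness for $\varphi$ supplies the required coordinates, producing an explicit rank-$k$ decomposition of $T_\varphi$ over $\fd{F}$. The reverse direction requires showing that every rank-$k$ decomposition is ``honest'': it respects the skeleton and its free summands encode a witness. I would establish this by slicewise matrix-rank arguments, since matrix rank is field-invariant and responds predictably to tensor decompositions. The main obstacle is precisely this uniqueness-type argument combined with the gadget design: one has to rule out ``hybrid'' decompositions that mix skeleton and workspace summands in unintended ways, which forces the skeleton slices to be engineered so that any deviation from the intended decomposition already exceeds the budget $k$. Doing this using only purely algebraic matrix-rank facts, without invoking positivity, completeness, or any other field-specific feature, so that the same tensor and the same rank threshold $k$ work uniformly for $\QN$, $\RN$, $\CN$, and $\GF_p$, is the most delicate part of the argument.
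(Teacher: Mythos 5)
Your outline reproduces the easy parts of the argument (normalization of $\varphi$ to a quadratic system, and the forward direction) but omits the two ideas that actually carry the theorem, and the sketch you give for them points in a direction that is not known to work. First, your gadgets are never constructed: the assertion that a small block has as its \emph{only} rank-one completion vectors of the form $(1,x_i) \otimes (1,x_j) \otimes (-x_k,1)$ is exactly the kind of uniqueness claim that needs proof (and such claims are generally false for small tensors without extra structure), and you give no mechanism that keeps a shared variable consistent across the different gadgets in which it occurs. Second, your plan to enforce ``honest'' decompositions by a skeleton of \emph{high-rank} slices that ``forces summands into prescribed positions'' is not the tool that works; the working tool is the opposite. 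The paper (following H{\aa}stad) arranges that the constructed tensor has many linearly independent \emph{rank-one} slices and invokes Lemma~\ref{lem:H2}: any minimum-rank expansion may be assumed to contain those slices among its rank-one terms. That lemma is what pins down the decomposition uniformly over every field, with no field-specific reasoning, and your proposal has no substitute for it.

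Concretely, the paper interposes a minimum-rank problem for a matrix with affine entries. A quadratic system $S$ with $m$ equations, normalized by assumptions {\bf A1}--{\bf A3}, is encoded in a $3m\times 3m$ matrix $A$ built from $3\times 3$ determinant gadgets on the diagonal together with consistency entries $u-u_j$; Lemma~\ref{pako1} shows that $\mrk_\fd{F}(A)=2m$ if and only if $S$ is solvable, the reverse direction being a determinant computation on $(2m+1)\times(2m+1)$ submatrices in which assumption {\bf A2} kills all cross terms, so that rank $2m$ forces every block determinant to vanish. Then the tensor $T_A$ is formed whose frontal slices are the coefficient matrices $A_x$ (each of rank one, and jointly linearly independent) together with the constant slice $A_1$, and Lemma~\ref{lem:trans} uses the slice lemma to show that $T_A$ has rank at most $2m+n$ if and only if $A$ has minrank at most $2m$: the $A_x$ can be assumed to appear in any rank-$(2m+n)$ expansion, and reading off the $A_1$ slice then yields an assignment witnessing the minrank bound. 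Your proposal contains no analogue of either step --- the soundness direction you yourself flag as ``the most delicate part'' is precisely the content of these two lemmas --- so as written it is a plan rather than a proof of the equivalence.
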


The existential theory of any finite field is \NP-complete, so
Theorem~\ref{thm:trans} implies H{\aa}stad's result that the
tensor rank problem is \NP-complete over finite fields~\cite{H90}.
If we use $\NPQ$, $\NPR$, and $\NPC$ for the computational
complexity class associated with deciding $\ETh(\QN)$,
$\ETh(\RN)$, and $\ETh(\CN)$, respectively, then we can rephrase
Theorem~\ref{thm:trans} as saying that the tensor rank problem is
$\NPQ$-complete over the rationals, $\NPR$-complete  over the
reals, and $\NPC$-complete over the complex numbers.\footnote{The
complexity class \NPR\ was introduced explicitly
in~\cite{S09b,SS15} and some other papers, but other researchers
probably thought of $\ETh(\RN)$ as a complexity class before,
e.g.,\ Shor~\cite{S91}, and Buss, Frandsen and
Shallit~\cite{BFS99}.}

While none of these complexity classes have been placed exactly
with respect to traditional complexity classes, we do know that
\[\NP \subseteq \NPC \subseteq \NPR \subseteq \PSPACE.\]
The lower bound is folklore~\cite[Proposition 8]{BFS99}.\footnote{We are
not aware of any stronger lower
bounds on $\exists \fd{F}$ for {\em any} field $\fd{F}$. If we
allow rings, then \NPZ, for example, is undecidable,
its complexity equivalent to the halting problem $\K$.
This was shown in a famous series of results by Davis, Robinson, and
Matiyasevic~\cite{M70,DMR76}.}
The inclusion $\NPC \subseteq \NPR$ follows
from the standard encoding of complex numbers as pairs of reals,
and the upper bound of \PSPACE\ on \NPR\ is due to Canny~\cite{C88}.

\NPR\ appears to contain problems harder than
problems in $\NP$ or $\NPC$: even a---seemingly simple---special
problem in $\NPR$ such as the sum of square roots problem has not
been located in the polynomial-time hierarchy
(see~\cite{ABKDM06}). On the other hand, Koiran~\cite{K96} showed
that $\NPC \subseteq \AM$, where $\AM$ is the class of
Arthur-Merlin games, which is known to lie in $\sh{2}$, the second
level of the polynomial-time hierarchy.\footnote{Koiran's result
assumes the generalized Riemann hypothesis (GRH); as far as we
know there is no unconditional upper bound on $\NPC$ other than
\PSPACE.} This suggests that the tensor rank problem over $\CN$
may be significantly easier to solve (if still hard) than the
tensor rank problem over $\RN$.

The complexity of $\NPQ$ is open, it is not even known (or expected) to be
decidable. The currently best result in that direction is the
undecidability of the $\exists\forall$-theory of $\QN$, using
definability results for $\ZN$ over $\QN$ in the footsteps of Julia
Robinson~\cite{K10,P09}. Any decidability results for the tensor
rank problem over $\QN$ would, by our reduction, imply rather
surprising decidability results for
$\NPQ$.\footnote{If $\ZN$ had an existential
definition in $\QN$, then it would follow that $\NPQ \equiv \NPZ
\equiv \K$. Koenigmsann~\cite{K10} gives some evidence that there
is no such definition (implying that his universal definition of
$\ZN$ in $\QN$ is optimal), however, there may be other routes
towards the undecidability of $\NPQ$, and it may be undecidable
without being as hard as $\K$.}
We do know,
however, that $\NPR \subseteq \NPQ$, since deciding the feasibility of
a set of strict polynomial inequalities is hard for \NPR~\cite{SS15},
and lies in \NPQ.

Figure~\ref{fig:hard} summarizes our results for various fields. We note in particular that the upper bounds imply that there are (at least in principle) algorithms for solving the tensor rank problem over finite fields, $\RN$ and $\CN$.

\begin{figure}[htb]
\begin{tabular}{l|l|l|l}
$\fd{F}$  &  complexity of tensor rank over $\fd{F}$ & lower bound & upper bound\\\hline
$\GF_p$ & \NP-complete~\cite{H90} &&  \\
\CN & \NPC-complete & \NP~\cite{H90,HL13} & $\AM \subseteq \sh{2}$~\cite{K96} \\
\RN & \NPR-complete & \NP~\cite{H90,HL13} & \PSPACE~\cite{C88} \\
\QN & \NPQ-complete & \NPR-hard & $\K$ \\
\end{tabular}
\caption{Complexity of the tensor rank problem over various rings. Previously all these problems were known to be \NP-hard using H{\aa}stad's argument~\cite{H90,HL13}.}\label{fig:hard}
\end{figure}

There are many computational problems related to tensors, and, as Hillar and Lim~\cite{HL13} showed compellingly, most of them are hard. Many of their hardness results are \NP-hardness proofs via direct reductions from \NP-complete problems, however, in one or two cases, they reduce from an \NPR-complete problem, and in those cases they also get \NPR-completeness results (even though they do not state this explicitly); in particular, testing whether $0$ is an eigenvalue of a given tensor over $\RN$ is \NPR-complete (see Example~\ref{ex:eig0} for a correction of their proof).

Our point is that it is important to capture the computational complexity of these algebraic problems more precisely than saying that they are \NP-hard, since there may be a significant variance in their hardness (from \NPC, close to \NP, to \NPR, probably closer to \PSPACE, to \NPQ, likely undecidable). For \NPR, there already is a sizable number of complete problems, starting with Mn\"{e}v's universality theorem showing that stretchability of pseudoline arrangements is complete for \NPR~\cite{M88,S91,RG95}, but also including the rectilinear crossing number~\cite{B91}, segment intersection graphs~\cite{KM94} and many others. Less is known about \NPQ, and \NPC.

Our proof of Theorem~\ref{thm:trans} will work via a minimum rank problem for matrices with multilinear entries; versions of this problem were previously studied by Buss, Frandsen and Shallit~\cite{BFS99}. We also show that both the minimum rank problem and the tensor rank problem exhibit algebraic universality.  Algebraic universality implies that solutions to a problem may require algebraic numbers of high complexity.

\begin{remark}
Shitov~\cite{S16} has recently shown a stronger result---the complexity of the tensor rank
over an {\em integral domain} is the same as the complexity of the existential theory of
that integral domain.
\end{remark}

\section{Definitions and Tools}

\subsection{Tensors}

A (3-dimensional, rational) {\em tensor} is a an array $T = (
t_{ijk} )^{d_1, d_2, d_3}_{i,j,k=1} \in \QN^{d_1 \times d_2 \times
d_3}$. Lower dimensional subarrays of a tensor are known as {\em
fibres} (one dimension) and {\em slices} (two dimensions). We
denote subarrays by using ``$:$'' instead of a variable, e.g.,\
$t_{:jk}$ is a column-fibre of $T$, and $t_{::k}$ is a frontal
slice.  See~\cite{KB09} for a survey and additional notation.

We will use the symbol $\otimes$ for the tensor (Kronecker, outer)
product: for two vectors $u \otimes v$ is a matrix with entries
$(u \otimes v)_{ij} = u(i)v(j)$, for three vectors $u \otimes v
\otimes w$ is a tensor with entries $(u \otimes v \otimes w)_{ijk}
= u(i)v(j)w(k)$. We say the tensor $u \otimes v \otimes w$ has
{\em rank $1$} unless it consists of zeros only, in which case it
has rank $0$. If a tensor $T$ can be written as a sum of at most
$r$ rank-$1$ tensors, we say $T$ has rank at most $r$. If $T = T_1
+ \cdots + T_r$, and each $T_i$ has rank at most $1$, we call
$(T_i)_{i=1}^r$ a {\em (rank-$r$) expansion} of $T$.

The following two results 
are adapted from the conference version of H{\aa}stad's paper~\cite{H89}; in the journal version~\cite{H90} they were replaced by references to other papers.


\begin{lemma}[H{\aa}stad~\cite{H89}]\label{lem:H1}
  Suppose $T = (t_{ijk})$ is a tensor of rank $r$ (over some field), and the slice $M = (t_{::k_1})$ has rank $1$, so $M = u_1 \otimes v_1$ for some $u_1, v_1$. Then $T$ can be written as $T = \sum_{\ell = 1}^r u_{\ell} \otimes v_{\ell} \otimes w_{\ell}$.
\end{lemma}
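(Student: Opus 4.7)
The plan is to start with any rank-$r$ expansion $T = \sum_{\ell=1}^r a_\ell \otimes b_\ell \otimes c_\ell$ and, by a single algebraic substitution, reshape its first summand into the desired form $u_1 \otimes v_1 \otimes w_1$, absorbing the discrepancy into the remaining $r-1$ terms.

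First I would look at the $k_1$-th frontal slice. By definition of the outer product, $t_{::k_1} = \sum_{\ell=1}^r c_\ell(k_1)\, a_\ell \otimes b_\ell$, and by hypothesis this matrix equals $u_1 \otimes v_1$, which is rank $1$ and therefore nonzero. Hence at least one coefficient $c_\ell(k_1)$ is nonzero; after reindexing I may assume $c_1(k_1) \neq 0$, and by the harmless rescaling $a_1 \leftarrow c_1(k_1)\, a_1$, $c_1 \leftarrow c_1/c_1(k_1)$ (which leaves $T$ unchanged) I may further assume $c_1(k_1) = 1$. Solving the slice identity for the first rank-$1$ matrix then gives
\[
a_1 \otimes b_1 \;=\; u_1 \otimes v_1 \;-\; \sum_{\ell \geq 2} c_\ell(k_1)\, a_\ell \otimes b_\ell .
\]

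Substituting this expression for $a_1 \otimes b_1$ into the first term $a_1 \otimes b_1 \otimes c_1$ of $T$ and collecting should produce
\[
T \;=\; u_1 \otimes v_1 \otimes c_1 \;+\; \sum_{\ell=2}^{r} a_\ell \otimes b_\ell \otimes \bigl(c_\ell - c_\ell(k_1)\, c_1\bigr),
\]
which is a rank-$r$ expansion of the required shape, with $w_1 = c_1$, $u_\ell = a_\ell$, $v_\ell = b_\ell$, and $w_\ell = c_\ell - c_\ell(k_1)\, c_1$ for $\ell \geq 2$. No step here is a serious obstacle; the only verification is that the substitution genuinely preserves $T$, which I would check slice-by-slice, noting that at slice $k_1$ the new first term contributes exactly $u_1 \otimes v_1$ (since $c_1(k_1)=1$) while each shifted $\ell \geq 2$ term contributes zero (since $c_\ell(k_1) - c_\ell(k_1)\cdot 1 = 0$).
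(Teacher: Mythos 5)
Your proof is correct and follows essentially the same route as the paper's: take any rank-$r$ expansion, observe that the $k_1$-slice expresses $M$ as a linear combination of the rank-$1$ matrices with at least one nonzero coefficient, and then perform the linear substitution that swaps $M$ into the first term while compensating in the third-mode vectors ($w_\ell \mapsto c_\ell - c_\ell(k_1)c_1$), exactly the transformation used in the paper. The only cosmetic difference is that you normalize $c_1(k_1)=1$ first, whereas the paper carries the factor $w_1(k_1)$ through explicitly.
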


In other words, $T$ has a rank-$r$ expansion using the slice $M$ as one of the rank-$1$ terms.

\begin{lemma}[H{\aa}stad~\cite{H89}]\label{lem:H2}
 Suppose $T = \llb t_{i,j,k}\rrb$ is a tensor of rank $r$ (over some field), and there is a set of linearly independent slices $M_h = \llb t_{::h} \rrb$ of rank $1$, so $M_h = u_h \otimes v_h$, for $h \in H$. Then $T$ can be written as $T = \sum_{\ell = 1}^r u_{\ell} \otimes v_{\ell} \otimes w_{\ell}$.
\end{lemma}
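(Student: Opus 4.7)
The plan is to prove Lemma~\ref{lem:H2} by induction on $|H|$, strengthening the inductive statement so that the single-slice pivot behind Lemma~\ref{lem:H1} can be iterated without destroying structure already put in place.

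The invariant I would carry is the following: under the lemma's hypotheses, there exists an expansion
\[ T = \sum_{h \in H} u_h \otimes v_h \otimes w_h \;+\; \sum_{\ell \in L} x_\ell \otimes y_\ell \otimes z_\ell, \qquad |H| + |L| \le r, \]
satisfying (a) $w_h(h') = \delta_{h,h'}$ for all $h,h' \in H$, and (b) $z_\ell(h) = 0$ for all $\ell \in L$ and $h \in H$. The base case $|H|=1$ follows from Lemma~\ref{lem:H1} in the explicit form given by the commented proof (namely $w_1(k_1) = 1$ and $z_\ell(k_1) = 0$ for $\ell \ge 2$).

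For the inductive step, I would fix $h_0 \in H$, apply the strong inductive hypothesis to $(T, H \setminus \{h_0\})$, and read off the $h_0$-slice to obtain
\[ M_{h_0} \;=\; \sum_{h \in H \setminus \{h_0\}} w_h(h_0)\, M_h \;+\; \sum_{\ell \in L} z_\ell(h_0)\, (x_\ell \otimes y_\ell). \]
Linear independence of $\{M_h : h \in H\}$ forces the second sum to be nonzero, so some $\ell^* \in L$ has $z_{\ell^*}(h_0) \ne 0$. I would then run the pivot of Lemma~\ref{lem:H1} on the index $\ell^*$: replace the term $x_{\ell^*} \otimes y_{\ell^*} \otimes z_{\ell^*}$ with $u_{h_0} \otimes v_{h_0} \otimes (z_{\ell^*}/z_{\ell^*}(h_0))$, and update every other third factor $c$ by $c \mapsto c - (c(h_0)/z_{\ell^*}(h_0))\, z_{\ell^*}$. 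Because $z_{\ell^*}$ already vanishes on $H \setminus \{h_0\}$ by clause (b) of the inductive hypothesis, a short calculation confirms that (a) and (b) continue to hold after the pivot, now for the full set $H$ (and with $L$ shrunk by one).

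The main obstacle is picking the right strengthening. With only the naive form of the claim (``for each $h \in H$ one term reads $u_h \otimes v_h \otimes w_h$''), the pivot at $\ell^*$ shifts the previously-installed $w_h$ by a multiple of $z_{\ell^*}$, and unless $z_{\ell^*}$ vanishes on $H \setminus \{h_0\}$ this shift destroys the $\delta$-structure of those $w_h$. Packaging the vanishing property (b) into the inductive hypothesis is precisely what makes the induction propagate the structure forward cleanly.
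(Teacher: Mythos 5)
Your proof is correct and takes essentially the same route as the paper's: iterate the pivot underlying Lemma~\ref{lem:H1}, using linear independence of the slices $M_h$ to find, at each step, a pivot index outside the already-installed terms. Your strengthened invariant (a)--(b) just makes explicit the bookkeeping that the paper handles by observing that the pivot never alters the matrix factors $u_h \otimes v_h$ of the other terms (which is all the lemma's conclusion requires), so both versions go through.
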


In other words, if we have a set of linearly independent, rank-$1$ slices of a tensor, we can always assume that they occur in a minimum rank expansion of the tensor.

\subsection{Logic and Complexity}

Over a field (or ring) $\fd{F}$ we can define the existential theory $\ETh(\fd{F})$ of $\fd{F}$ as the set of all true existential first-order sentences in $\fd{F}$. We work over the signature $(0,1,+,*)$ and allow equality as predicate (for $\QN$ and $\RN$ we can define order from that: $x \geq 0$ if and only if $(\exists y_0,y_1,y_2,y_3) [x = y_0^2+y_1^2+y_2^2+y_3^2]$, using Lagrange's theorem for $\QN$).

\begin{lemma}[Buss, Frandsen, Shallit~\cite{BFS99}]\label{lem:BFS}
 Suppose $\fd{F}$ is a field (a commutative ring without zero divisors is sufficient). Given a first-order existential sentence over $\fd{F}$ one can construct (in polynomial time) a family of (multivariate) polynomials $p_1, \ldots, p_n$ with integer coefficients so that $\varphi$ is true if and only if $(\exists x)[p_1(x) = 0 \wedge \cdots \wedge p_n(x) = 0]$ is true over $\fd{F}$. If $\fd{F}$ is not algebraically closed, then we can assume that $n = 1$.
\end{lemma}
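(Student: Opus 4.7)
The plan is to massage the sentence $\varphi = \exists \vec{x}\, \psi(\vec{x})$ into the stated conjunctive form through three standard syntactic normalizations. First I would push all negations inward using De Morgan's laws so that $\psi$ is built from atoms $p(\vec{x}) = 0$ and $p(\vec{x}) \neq 0$ using only $\wedge$ and $\vee$. Then I would apply the Rabinowitsch trick to every inequation: replace $p(\vec{x}) \neq 0$ by the existential equation $y\cdot p(\vec{x}) - 1 = 0$ with a fresh variable $y$ appended to the existential prefix. The assumption that $\fd{F}$ is a commutative ring without zero divisors is exactly what makes this substitution equivalent, since $y\cdot p = 1$ forces $p \neq 0$.

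Next I would translate the resulting $\{\wedge,\vee\}$-combination of equations into a conjunction of polynomial equations by recursion on the formula structure. A conjunction $\psi_1 \wedge \psi_2$ simply produces the union of the equations obtained recursively. A disjunction $\psi_1 \vee \psi_2$ is handled with a fresh switch variable $s$: emit the equation $s(1-s) = 0$, and if the recursive translations yield equations $p_\alpha = 0$ for $\psi_1$ and $q_\beta = 0$ for $\psi_2$, emit $s\cdot p_\alpha = 0$ and $(1-s)\cdot q_\beta = 0$ in their place. Then $s = 1$ enforces the first branch and $s = 0$ the second, so the whole system is satisfiable iff one of the branches is. The number of equations and fresh variables is linear in the size of $\psi$, so this whole first half runs in polynomial time.

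For the final clause, assume $\fd{F}$ is not algebraically closed, and pick any univariate $h(t) \in \fd{F}[t]$ of some degree $d \geq 2$ without roots in $\fd{F}$. Its homogenization $H(x,y) = y^d\,h(x/y)$ is anisotropic, meaning $H(a,b) = 0$ in $\fd{F}$ forces $a = b = 0$ (either $b = 0$, and then the leading coefficient times $a^d$ vanishes, giving $a = 0$; or $b \neq 0$ and $h(a/b) = 0$, contradicting the choice of $h$). Over an ordered field such as $\QN$ or $\RN$ the simpler form $x^2 + y^2$ serves the same purpose. I would then collapse the conjunction $p_1 = 0 \wedge \cdots \wedge p_n = 0$ into the single equation obtained by applying $H$ in a balanced binary tree to $p_1,\ldots,p_n$.

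The main nuisance to verify is that this last collapse is polynomial-time under a reasonable representation of polynomials: the balanced tree has depth $O(\log n)$ and each level multiplies the degree by the fixed constant $d$, so the final degree is polynomial in $n$; over characteristic zero one may use the single-step form $\sum_i p_i^2$ and avoid iteration altogether. Correctness of each of the three reductions is immediate from its construction, and composing them yields the desired polynomial-time translation of $\varphi$ into a conjunction of polynomial equations, collapsible to one equation whenever $\fd{F}$ is not algebraically closed.
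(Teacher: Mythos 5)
The paper cites this lemma from Buss, Frandsen and Shallit~\cite{BFS99} and does not reprove it, so there is no in-paper argument to compare against. Your outline is the standard one and it is correct over a \emph{field}: push negations inward, Rabinowitsch a fresh inverse variable for each disequality, thread a switch variable $s$ with $s(1-s)=0$ through each disjunction, and (when $\fd{F}$ is not algebraically closed) fold the resulting conjunction into one equation using a fixed anisotropic form for $\fd{F}$. The polynomial-time discussion at the end, including the caveat about representing the collapsed polynomial as a circuit rather than a dense list of monomials, is the right thing to check.

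You have, however, misplaced where the ``no zero divisors'' hypothesis is actually doing work, and as a consequence your argument does not establish the lemma's parenthetical claim for general integral domains. In the Rabinowitsch step, the direction $y\cdot p = 1 \Rightarrow p \neq 0$ holds in any nontrivial ring; the direction you actually need, $p \neq 0 \Rightarrow (\exists y)\,y\cdot p = 1$, requires $\fd{F}$ to be a field and already fails over $\ZN$ (take $p=2$). So over a non-field integral domain your translation simply cannot eliminate disequalities, and a different device is needed there. The no-zero-divisors hypothesis is in fact used elsewhere in your construction: to deduce $s\in\{0,1\}$ from $s(1-s)=0$ (over $\ZN/6\ZN$, say, $s=3$ breaks the disjunction gadget), and in the anisotropy argument to go from ``leading coefficient times $a^{d}$ vanishes'' to $a=0$; note also that the case $b\neq 0$ in that argument implicitly divides by $b$, which again presupposes a field (or at least that $h$ has no root in the fraction field, not merely in $\fd{F}$). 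For the fields the paper actually works over, your proposal is sound; the integral-domain extension would need a genuinely different treatment of $\neq$.
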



We write $\exists \fd{F}$ for the complexity class which is formed
by taking the polynomial-time downward closure of $\ETh(\fd{F})$.
Lemma~\ref{lem:BFS} then says that testing feasibility of a system
of polynomial equations over $\fd{F}$ is {\em complete} for the
complexity class $\exists \fd{F}$, that is, it is {\em hard} for
the complexity class (every problem in the class reduces to it),
and it lies in the class (feasibility of a polynomial system over
$\fd{F}$ can be tested in $\exists \fd{F}$).\footnote{In other
models, e.g., the Blum-Shub-Smale model~\cite{BSS89} this was
well-known earlier.} We are particularly interested in $\fd{F} \in
\{\GF_p, \QN, \RN, \CN\}$. We discussed relationships between
these complexity classes and traditional complexity classes in the
introduction.


 Since a polynomial with integer coefficients can be calculated
via a sequence of sums and products of variables and constants $1$ and $-1$, the following result follows
immediately from Lemma~\ref{lem:BFS}.

\begin{lemma}\label{lem:QE}
  Let $\fd{F}$ be a field (or commutative ring without zero divisors). Deciding whether a system of equations of the types $x_i = x_j + x_k$, $x_i = x_j x_k$, $x_i = x_j$, and $x_i = 1$, is solvable over $\fd{F}$ is complete for $\exists \fd{F}$.
\end{lemma}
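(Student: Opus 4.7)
The plan is to prove both directions of completeness, leaning on Lemma~\ref{lem:BFS}.  For containment in $\exists\fd{F}$, the observation is that a system of equations of the described shape, together with existential quantification over all of its variables, is already a (syntactically restricted) existential first-order sentence over $\fd{F}$.  I would therefore simply take the conjunction of all the equations and prefix existential quantifiers over every $x_i$.  The one mildly non-trivial point is that an equation $x_i = K$ must be rewritten as an $(0,1,+,*)$-term; I would handle this by building $K$ via repeated doubling (introducing fresh existentially quantified variables $y_0 = 1$, $y_{j+1} = y_j + y_j$, and then summing the $y_j$ corresponding to the binary expansion of $K$), which adds only $O(\log |K|)$ extra variables and equations per constant.

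For hardness, I would reduce from the feasibility of a polynomial system, which is $\exists\fd{F}$-hard by Lemma~\ref{lem:BFS}.  Given polynomials $p_1,\ldots,p_n$ over $\ZN$ (shared variable set $x_1,\ldots,x_m$), I would evaluate each $p_i$ by a straight-line program whose instructions are either loads of an input variable, loads of an integer constant, additions, or multiplications.  For every instruction node $v$ I introduce a fresh variable $y_v$ and the corresponding equation of one of the four allowed forms ($y_v = x_j$, $y_v = K$, $y_v = y_a + y_b$, or $y_v = y_a y_b$), and for the output $y_{\mathrm{out}(i)}$ of each $p_i$ I add the constraint $y_{\mathrm{out}(i)} = 0$, which is itself of the allowed type $x_i = K$ with $K = 0$.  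The resulting system is feasible over $\fd{F}$ iff the polynomial system is, and it has polynomial size.

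The only potential obstacle is keeping the reduction polynomial when the $p_i$ carry large integer coefficients, but the statement explicitly allows the primitive $x_i = K$ for arbitrary $K \in \ZN$, so large coefficients can be dropped in directly and do not inflate the reduction.  Overall, the proof is essentially the remark flagged in the preceding paragraph: a polynomial is just a straight-line program in $+$ and $*$, and the four allowed equation shapes are exactly the primitives of such a program together with a way to assert the output equals zero.
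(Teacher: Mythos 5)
Your proof is correct and follows essentially the same route as the paper, which simply observes that the lemma is immediate from Lemma~\ref{lem:BFS} because a polynomial can be evaluated by a straight-line sequence of sums and products; your straight-line-program decomposition with fresh variables and the constraint $y_{\mathrm{out}} = 0$ is exactly that observation spelled out, and your membership argument (including encoding constants $K$ by repeated doubling) is a routine detail the paper leaves implicit.
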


Call a such system of equations a {\em quadratic system}.

Let us illustrate \NPR-completeness with an example relevant to
tensors. This corrects an example from Hillar and Lim~\cite[Remark
2.3]{HL13}.

\begin{example}[Hillar, Lim~\cite{HL13}]\label{ex:eig0}
A tensor $T = \llb t_{i,j,k} \rrb^{n,n,n}_{i,j,k=1}$ has {\em
eigenvalue} $\lambda$ if there is a non-zero vector $x$, the {\em
eigenvector}, so that
  \[ \sum^{n,n}_{i,j=1} t_{i,j,k} x_i x_j = \lambda x_k.\footnote{There are other definitions of eigenvalues for tensors as well.}\]
So $\lambda = 0$ is an eigenvalue of $T$ if there is a non-zero
vector $x$ satisfying $\sum^{n,n}_{i,j=1} t_{i,j,k} x_i x_j = 0$,
which is a homogenous quadratic system of equations, and,
obviously every homogenous quadratic system can be written in this
form. So deciding whether a tensor has $0$ as an eigenvalue is
computationally equivalent to deciding whether a homogenous
quadratic system has a non-trivial solution. This problem is
sometimes called \HTN\ (for Hilbert's homogenous Nullstellensatz),
and, over $\RN$, was shown to be \NPR-complete
in~\cite{S12}.\footnote{The proof in~\cite{S12} yields a quartic
systems, but that can be reduced to quadratic, by removing the
final (unnecessary) squaring operation.} Thus, deciding whether
$0$ is an eigenvalue of a tensor $T$ over $\RN$ is \NPR-complete.
Hillar and Lim~\cite[Remark 2.3]{HL13} also sketch a proof of
the \NPR-completeness of \HTN, but their proof of hardness of the quadratic homogenous
system is not correct; in their notation, they
require $z^2 = \sum_{i=1}^n x_i^2$, but this cannot be guaranteed.
For example, they would take the quadratic system $(x+2)^2 = 0$
and homogenize it as $x^2+4xz+4z^2 = 0$ and require $x^2 = z^2$. While the original system
has a non-trivial solution, $x = -2$, it is easy to see that the homogenized system only
has the trivial solution $x = z = 0$. The hardness proof seems to require a non-uniform construction as in~\cite{S12}.
 \qed
\end{example}

\subsection{Algebraic Universality}

A solution to a system of algebraic equations may have high
complexity, e.g.,\ consider $x_0 = 1$, $x_1 = x_0 + x_0$, $x_2 =
x_1 x_1$, $\ldots$, $x_n = x_{n-1}x_{n-1}$. This system of $n+1$
equations defines a number $x_n$ requiring a bit expansion of
exponential length. Similarly, one can define a linear system
whose solution is an algebraic number of high degree.
\NPR-completeness reductions often preserve this property, so that
\NPR-complete problems require solutions of high complexity. For
example, Bienstock and Dean~\cite{B91, BD93} showed that any
straight-line drawing of a graph with the smallest number of
crossings may require vertex coordinates of double-exponential
precision. This is a very weak type of algebraic universality. A
stronger variant would, for example, show that for any algebraic
number there is a graph which contains that algebraic number
(after some normalization). A much stronger type of universality
result goes back to Mn\"{e}v~\cite{M88} who showed that any basis
semialgebraic set is homotopy (even stably) equivalent to the
realization space of a pseudoline arrangement. That is, for every
basic semialgebraic set Mn\"{e}v defines a pseudoline arrangement
so that the space of straight-line realizations of that pseudoline
arrangement is essentially the same as the basic semialgebraic set
up to some form of algebraic equivalence. We will show a weaker
type of algebraic universality for the tensor rank problem. To do
this properly, we need a definition of the realization space of a
rank-$r$ tensor. For a $3$-dimensional tensor $T \in \QN^{d_1
\times d_2 \times d_3}$, and integer $r$ define the rank-$r$
realization space of $T$ as
\[{\cal R}(T, r) := \{(u_1, v_1, w_1, \ldots, u_r, v_r, w_r): T = \sum_{\ell = 1}^r u_{\ell} \otimes v_{\ell} \otimes w_{\ell} \}.\]
Obviously, ${\cal R}(T,r) \subseteq \RN^{(d1+d2+d3)r}$ is an {\em algebraic set}; that is, it can be written as the set of common roots of a family of multivariate polynomials (with integer coefficients).

We would like to show that every algebraic set (with integer coefficients) over $\RN$ is essentially the same as some
${\cal R}(T,r)$ for some $T$ and $r$, but it seems to have too many degrees of freedom, so instead we work with
\begin{align*}
        {\cal R}(T, r, S, s) := & \{(w_1, \ldots, w_s): \exists (u_{s+1}, v_{s+1}, w_{s+1}, \ldots, u_r, v_r, w_r)  \\
                               & T = \sum_{\ell = 1}^s S_{\ell} \otimes w_{\ell}+  \sum_{\ell = s+1}^r u_{\ell} \otimes v_{\ell} \otimes w_{\ell}\}
\end{align*}
where $S$ is a family of $s$ rank-$1$ matrices. In this version of ${\cal R}(T,r)$ we restrict the first $s$ products $u_{\ell} \otimes v_{\ell}$
to be $S_{\ell}$.

We need to make precise the notion of being ``essentially the same'', we will use the notion of stable equivalence introduced by Richter-Gebert to uniformize various universality constructions~\cite{RG95,RG96}. Stable equivalence implies homotopy equivalence, and it maintains complexity of algebraic points~\cite{RG96}. Two sets are {\em rationally equivalent} if there
is a rational homeomorphism between the two sets.  A set $X$ is a {\em stable projection} of $Y$ if
\[Y = \{(y,y'): y \in X, \langle p_i(y), y'\rangle = c_i, 1 \leq i \leq n\}, \]
where the $p_i$ 
are multivariate polynomials with integer coefficients, and the $c_i$ are constants. Two sets are {\em stably equivalent} if they are in the same equivalence class with respect to stable projections and rational transformations.

We will show that for every algebraic set (with integer coefficients), there are $T$, $r$, $S$ and $s$ so that the algebraic set is stably equivalent to ${\cal R}(T, r, S, s)$, so this, restricted, tensor rank problem is universal for algebraic sets. By using ${\cal R}(T, r, S, s)$ instead of ${\cal R}(T,r)$ we side-step the fact that the two H{\aa}stad lemmas do not yield stable equivalence: forcing a particular $u_i \otimes v_i$ to equal a slice of $T$ changes the number of algebraic components of the solution set, so it cannot maintain homotopy equivalence.

\section{Hardness of Tensor Rank}

In this section we will see that the tensor rank problem over a field $\fd{F}$ is complete for $\exists \fd{F}$. In the Blum-Shub-Smale model, the same proof shows that the tensor rank problem over $\fd{F}$ is $\NP_\fd{F}$-complete. We will not discuss the Blum-Shub-Smale model in detail, and refer the reader to~\cite{BCSS98}.

\subsection{A Minimum Rank Problem}

For a matrix $A$ with entries being multinomials expressions in $\fd{F}[x_1, \ldots, x_n]$, the minrank of $M$ is the smallest (matrix) rank of $A$ over $\fd{F}$ achievable
by replacing variables $x_i$ with values in $\fd{F}$ and evaluating the resulting expressions.

\begin{definition}\label{pako0}
Let $\mrk_\fd{F}(A)$ be the minimum rank of $A$ (as a matrix over $\fd{F}$) over all possible assignments of
values in $\fd{F}$ to variables in $A$.
\end{definition}

Buss, Frandsen and Shallit~\cite{BFS99} showed that the minrank problem over $\fd{F}$ is complete for $\exists \fd{F}$, even if entries are restricted to be in $\fd{F} \cup \{x_1, \ldots, x_n\}$. We will show that the minrank problem is $\exists \fd{F}$-hard for matrices of a very specific form which lends itself to be turned into a tensor rank problem.\footnote{There is also a notion of minrank for matrices with entries in $\{+,-\}$. Given such a matrix is there a real matrix of rank at most $3$ with that sign pattern? This problem turns out to be \NPR-hard as well~\cite{BFGJK09,BK15}, but does not seem to be related to our minrank problem.}

Suppose we are given a quadratic system $S$ with $m$ equations $e_1,\dots,e_m$; we construct a square $3m\times 3m$ matrix $A$
with affine entries whose minrank will be connected to the feasibility of $S$ (see Definition~\ref{pako0} and
Lemma~\ref{pako1} below for a precise statement). To simplify the statements and the proofs we make the following
assumptions on the quadratic system:
\begin{description}
\item[A1] No variable occurs more than once in an equation.
\item[A2] Any two equations share at most one variable.
\item[A3] If $w=uv$ is an equation in $S$ then $v$ occurs exactly twice in $S$ and the other occurrence of $v$ is
in an equation of the form $v=z$.
\end{description}
Assumptions {\bf A1} and {\bf A2} are not restrictive since we can always ``copy'' a variable $v$ to a variable $v'$ using
equation $v'=v$ (and then use $v'$ in place of $v$). Assumption {\bf A3} is not restrictive since we can replace an
equation $w=uv$ by a pair of equations $v' = v,  w=uv'$, where $v'$ is a new variable.

The following $3\times 3$ matrices are the main building block in our construction
\begin{eqnarray}\label{A1}
\quad\quad \det\left(%
\begin{array}{ccc}
  1 & 0 & a \\
  0 & 1 & b \\
  1 & 1 & c \\
\end{array}%
\right) & = & c - (a+b), \\
\det\left(%
\begin{array}{ccc}
  1 & 0 & c \\
  0 & 1 & a \\
  -1 & b & 0 \\
\end{array}%
\right) & = & c - ab. \label{A2}
\end{eqnarray}
To construct the matrix $A$ we first place $3\times 3$
blocks on the diagonal as follows: The $\ell$-th diagonal $3\times 3$ block is given by
\begin{itemize}
\item the matrix in~\eqref{A1} if $e_\ell$ is of the form $c = a+b$,
\item the matrix in~\eqref{A2} if $e_\ell$ is of the form $c = a b$,
\item the matrix in~\eqref{A1} with $b=0$ if $e_\ell$ is of the form $c = a$,
\item the matrix in~\eqref{A1} with $b=0, a=K$ if $e_\ell$ is of the form $c = K$, where $K$ is a rational constant.
\end{itemize}
Note that equation $e_\ell$ is satisfied if and only if the determinant of the block is zero.
Let $R_u$ be the increasing list of rows that contain variable $u$ and let $C_u$ be the increasing
list of columns that contain variable~$u$. From assumption {\bf A1} it follows that a $3\times 3$
block contains at most one occurrence of $u$. Thus $|R_u|=|C_u|$ and
$u$ occurs at positions $(R_u[i],C_u[i])$ for $i=1\dots|R_u|$. Also note that for distinct variables $u,v$ we have that $R_u$ and
$R_v$ are disjoint (since in the matrices in~\eqref{A1} and~\eqref{A2} the
variables are in different rows). Now we add a few more entries into the matrix $A$. For every variable $u$, for every
$1\leq j\neq k\leq |R_u|$ we add an entry $u - u_j$, with new variable $u_j$, at position $(R_u[j],C_u[k])$ in $A$.
This completes the construction of matrix $A$.
\begin{observation}\label{bobo}
The construction satisfies the following:
\begin{enumerate}
\item $u$ occurs exactly at positions $R_u\times C_u$ and it always occurs with coefficient $1$,
\item the non-zero entries of $A$ outside of the diagonal $3\times 3$ blocks are at indices $\bigcup_u R_u\times C_u$,
\item $u_j$ only occurs in the $R_u[j]$-th row  and it always occurs with coefficient $-1$,
\item leaving out every 3rd row and every 3rd column of $A$ (that is, rows and columns whose index
is divisible by $3$) yields the $2m\times 2m$ identity matrix.
\end{enumerate}
\end{observation}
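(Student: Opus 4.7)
The plan is a direct verification of the four items from the construction, with the main work concentrated in item 4, where assumption A3 becomes essential.

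For item 1 I would tabulate where $u$ can appear. Inside a diagonal block the matrices \eqref{A1} and \eqref{A2} place each of $a,b,c$ at a single entry with coefficient $+1$, so the in-block occurrences of $u$ are exactly the diagonal pairs $(R_u[i],C_u[i])$; off the blocks, the only entries involving $u$ are the terms $u-u_j$ explicitly placed at $(R_u[j],C_u[k])$ for $j\neq k$, again with coefficient $+1$. Since both $R_u$ and $C_u$ are sorted and the blocks are arranged along the diagonal, the pair $(R_u[i],C_u[i])$ really does correspond to the $i$-th occurrence of $u$, and combining both cases covers precisely $R_u\times C_u$. Inspection of \eqref{A1} and \eqref{A2} shows that different variables occupy different rows of a block, so the $R_u$'s are pairwise disjoint and no two variables' entries can collide and cancel.

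Items 2 and 3 follow immediately from the construction. The only non-zero entries introduced outside the diagonal blocks are the $u-u_j$, which sit in $\bigcup_u R_u\times C_u$; and $u_j$, being a freshly introduced variable, appears only in those entries, all in row $R_u[j]$ and all with coefficient $-1$.

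For item 4, restricting \eqref{A1} or \eqref{A2} to rows and columns $\{1,2\}$ leaves $I_2$, so the diagonal blocks contribute the identity. It remains to show that every off-block entry has its row or its column divisible by $3$. Each occurrence of a variable lies in row $3$ or column $3$ of its block, so an entry at $(R_u[j],C_u[k])$ fails to be dropped only when the $j$-th occurrence of $u$ has $r_j\neq 3$ (forcing $c_j=3$) and the $k$-th has $c_k\neq 3$ (forcing $r_k=3$). Among the five variable positions in \eqref{A1} and \eqref{A2}, the only one with $r=3$ and $c\neq 3$ is $(3,2)$ in \eqref{A2}, i.e., the role of second operand of a multiplication. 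By A3 the only other occurrence of such a $u$ is in an equation $u=z$, which places $u$ at the $(3,3)$ position of an \eqref{A1} block; hence both occurrences have $r=3$ and the bad configuration does not arise. I expect this case analysis, not the bookkeeping in items 1--3, to be the main obstacle.
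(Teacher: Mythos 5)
Your proposal is correct and takes essentially the same route as the paper: items 1--3 are direct bookkeeping from the construction, and item 4 is argued exactly as the paper does, by observing that the only variable slot whose column is not divisible by $3$ is the ``$b$'' position $(3,2)$ in~\eqref{A2}, and that assumption \textbf{A3} forces the other occurrence of such a variable into the $(3,3)$ slot of a copy equation, so all its occurrences lie in rows divisible by $3$. The only blemish is the count ``five variable positions'' (there are six variable occurrences in the two templates, occupying four distinct positions), which does not affect the argument.
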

The third item in Observation~\ref{bobo} follows from assumption {\bf A3} and the form of the matrices in~\eqref{A1}
and~\eqref{A2}. Note that the only occurrence of a variable in a column whose index is not divisible by $3$
must come from ``$b$'' in~\eqref{A2}, that is, an equation of the form $c=ab$. The other occurrence of $b$ is
in a row whose index is divisible by $3$ (using assumption {\bf A3}). Since both occurrences of $b$ are in rows whose index is divisible by $3$
we have that $R_b\times C_b$ is in the left-out part of $A$. We showed that for every $u$ either all entries of $C_u$
or all entries of $R_u$ are divisible by $3$ and hence if we leave out every third column and every third row there
will be no off-diagonal entries.

We have the following connection between the quadratic system $S$ and its matrix $A$.
\begin{lemma}\label{pako1}
Assume that a quadratic system $S$ satisfies assumptions {\bf A1}, {\bf A2}, and {\bf A3}. Let $A$ be the matrix
corresponding to $S$. System $S$ is solvable over $\fd{F}$ if and only if $\mrk_\fd{F}(A) = 2m$.
\end{lemma}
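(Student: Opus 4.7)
My plan is to prove both directions of the equivalence using the block structure of $A$ and the Schur complement characterization of rank. The unconditional lower bound $\rk(A) \geq 2m$ comes straight from Observation~\ref{bobo}(4): the $2m \times 2m$ submatrix obtained by deleting every third row and column is the identity, so every evaluation of $A$ has rank at least $2m$. For the ``if'' direction, given a solution $\sigma$ of $S$, I will set each variable $u$ to $\sigma(u)$ and each auxiliary variable $u_j$ also to $\sigma(u)$; then every off-block entry $u - u_j$ vanishes and $A$ becomes block-diagonal with $m$ blocks of size $3 \times 3$. By the construction of \eqref{A1}, \eqref{A2}, and their degenerate variants, the determinant of the $\ell$-th block is exactly the residual of equation $e_\ell$, which is zero because $\sigma$ satisfies $e_\ell$; hence each block has rank at most $2$, giving $\rk(A) \leq 2m$ and, combined with the lower bound, $\mrk_\fd{F}(A) = 2m$.

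For the ``only if'' direction, suppose an evaluation $\tau$ achieves $\rk(A) = 2m$. Reorder rows and columns so that non-special ones come first, writing $A = \bigl(\begin{smallmatrix} I_{2m} & X \\ Y & Z \end{smallmatrix}\bigr)$. The Schur complement identity gives $\rk(A) = 2m + \rk(Z - YX)$, so $Z - YX = 0$. My goal is to show that for every $\ell$ the entry $(Z - YX)_{\ell, \ell}$ equals the residual of $e_\ell$; then $Z - YX = 0$ forces all equations of $S$ to hold. The ``in-block'' part of $(YX)_{\ell, \ell}$, summing over $R \in \{3\ell-2, 3\ell-1\}$, evaluates to $1 \cdot a + 1 \cdot b = a+b$ for type \eqref{A1} and to $(-1) \cdot c + b \cdot a = ab - c$ for type \eqref{A2}. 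Combined with $Z_{\ell,\ell} = c$ (respectively $0$), the in-block portion of $Z_{\ell,\ell} - (YX)_{\ell,\ell}$ is $c - (a+b)$ (respectively $c - ab$), exactly the residual of $e_\ell$.

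The main technical step is to show that the remaining ``off-block'' summands $Y_{3\ell, R} \cdot X_{R, 3\ell}$, for non-special $R$ outside block $\ell$, all vanish. For $Y_{3\ell, R}$ to be nonzero, the variable $v$ at row $3\ell$ of block $\ell$ must have an occurrence at column $R$ of some other block, and by the templates \eqref{A1}, \eqref{A2} together with Observation~\ref{bobo}(2), the only non-special columns hosting a variable are the $3\ell' - 1$ columns carrying the $b$ of a multiplication block $\ell'$. Assumption \textbf{A3} then forces $v$ to be both the mult-$b$ of some block $\ell'$ and the copy-$c$ of block $\ell$, which pins down $R_v = (3\ell, 3\ell')$ and $C_v = (3\ell, 3\ell' - 1)$ (up to ordering). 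For $X_{R, 3\ell}$ to be nonzero with $R = 3\ell' - 1$, some variable at column $3\ell$ of block $\ell$ would have to be shared with row $3\ell' - 1$ of block $\ell'$. Assumption \textbf{A2}, forbidding two blocks from sharing more than one variable, implies this second shared variable must also be $v$; but tracing the two positions of $v$ given by \textbf{A3} shows $v$ has no off-block entry at $(3\ell' - 1, 3\ell)$, so $X_{3\ell' - 1, 3\ell} = 0$. Hence every off-block summand is zero, $(Z - YX)_{\ell,\ell}$ reduces to the residual of $e_\ell$, and $Z - YX = 0$ produces a solution of $S$.

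The main obstacle is exactly the case analysis in the last paragraph: verifying that \textbf{A1}--\textbf{A3} together kill all off-block contributions to the diagonal of the Schur complement. This is why the construction was set up to obey these three assumptions; relaxing any of them would require a genuinely more delicate argument (or would fail outright).
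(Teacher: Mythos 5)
Your proof is correct and takes essentially the same route as the paper: the lower bound and the substitution argument for the ``if'' direction are identical, and in the ``only if'' direction your Schur-complement entry $(Z-YX)_{\ell,\ell}$ is exactly the bordered $(2m+1)\times(2m+1)$ determinant the paper expands block by block, with the same use of \textbf{A1}--\textbf{A3} to show every off-block cross term $A_{3\ell,R}A_{R,3\ell}$ vanishes. The difference is purely one of packaging (one global Schur complement versus one bordered determinant per block), so no changes are needed.
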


\begin{example}\label{ex:sep}
Before proving Lemma~\ref{pako1} let us illustrate the construction with an example.
Let $S=\{u = x y, y = x, u = 2\}$.
Then the matrix $A$ corresponding to $S$ is
\begin{equation}\label{exot}
\left(%
\begin{array}{ccccccccc}
1 & 0 & u & 0 & 0 & 0 & 0 & 0 & u-u_1 \\
0 & 1 & x & 0 & 0 & x-x_1 & 0 & 0 & 0 \\
-1 & y & 0 & 0 & 0 & y-y_1 & 0 & 0 & 0 \\
0 & 0 & x-x_2 & 1 & 0 & x & 0 & 0 & 0 \\
0 & 0 & 0 & 0 & 1 & 0 & 0 & 0 & 0 \\
0 & y-y_2 & 0 & 1 & 1 & y & 0 & 0 & 0 \\
0 & 0 & 0 & 0 & 0 & 0 & 1 & 0 & 2 \\
\phantom{A}0\phantom{A} & \phantom{A}0\phantom{A} & \phantom{A}0\phantom{A} & \phantom{A}0\phantom{A} &
\phantom{A}0\phantom{A} & \phantom{A}0\phantom{A} & \phantom{A}0\phantom{A} & \phantom{A}1\phantom{A} & \phantom{A}0\phantom{A} \\
0 & 0 & u-u_2 & 0 & 0 & 0 & 1 & 1 & u
\end{array}%
\right)
\end{equation}
The quadratic system $S$ encodes the equation $x^2=2$. This equation has a solution over ${\mathbb R}$ and hence, by Lemma~\ref{pako1},
$\mrk_{\mathbb R}(A)=6$. On the other hand the equation does not have a solution over ${\mathbb Q}$ and hence,
by Lemma~\ref{pako1}, $\mrk_{\mathbb Q}(A)\geq 7$.
\qed\end{example}

\begin{proof}[Proof of Lemma~\ref{pako1}]
From Observation~\ref{bobo} (part 4) we have $\mrk_\fd{F}(A)\geq 2m$.

Suppose that $S$ has a solution $\sigma$ with values in $\fd{F}$. For each variable $u$ assign value $\sigma(u)$ to $u$ and
all $u_i$'s in $A$. Note that this assignment makes all entries outside the diagonal $3\times 3$ blocks
zero (since those entries are of the form $u-u_i$). Also note that each $3\times 3$ block has rank $2$ (since it contains a $2\times 2$
identity matrix and has determinant equal to zero---here we use the fact that $\sigma$ is a solution of $S$).
The rank of a block diagonal matrix is the sum of the ranks of the blocks and hence $\mrk_\fd{F}(A)= 2m$.

It remains to show that $\mrk_\fd{F}(A)= 2m$ implies that $S$ has a solution in $\fd{F}$. Let $\sigma$
be an assignment with values in $\fd{F}$ such that the rank of $\sigma(A)$ is $2m$. Consider the $\ell$-th $3\times 3$ diagonal
block $\hat{B}$. Let $\hat{A}$ be the matrix obtained from $\sigma(A)$ by leaving out every third row and every third column except for the column
and the row with index $3\ell$. Note that $\hat{A}$ is a $(2m+1)\times (2m+1)$ matrix and, by Observation~\ref{bobo} (part 4),
if we leave out the row and column with index $2\ell+1$ from $\hat{A}$ we get the identity matrix. Hence we have
\begin{eqnarray*}\label{enko}
\det(\hat{A}) & = &\hat{A}_{2\ell+1,2\ell+1} - \sum_{i\neq 2\ell+1} \hat{A}_{i,2\ell+1} \hat{A}_{2\ell+1,i} \\
                   & = & \det(\hat{B}) - \sum_{i\not\in\{2\ell-1,2\ell,2\ell+1\}} \hat{A}_{i,2\ell+1} \hat{A}_{2\ell+1,i}.
\end{eqnarray*}
We have
\begin{equation}\label{enko2}
\sum_{i\not\in\{2\ell-1,2\ell,2\ell+1\}} \hat{A}_{i,2\ell+1} \hat{A}_{2\ell+1,i} =  \sum_{i\neq\ell} A_{3i-2,3\ell} A_{3\ell,3i-2} + \sum_{i\neq\ell} A_{3i-1,3\ell} A_{3\ell,3i-1}.
\end{equation}
Note that $A_{3\ell,3i-2}=0$ for all $i\neq\ell$ since the first column in~\eqref{A1} and~\eqref{A2} does not contain any variables
(also see Observation~\ref{bobo} (part 2)). If $A_{3\ell,3i-1}\neq 0$ then the $i$-th block contains a variable in the $2$-nd column
(and hence in the $3$-rd row) and that variable also
occurs in the $3$-rd row of the $\ell$-th block. If $A_{3i-1,3\ell}\neq 0$ then the $i$-th block contains a variable in the $2$-nd row and that
variable also occurs in the $3$-rd column of the $\ell$-th block. Thus if both $A_{3\ell,3i-1}\neq 0$ and $A_{3i-1,3\ell}\neq 0$
then $e_i$ and $e_\ell$ would share two variables (occurring in the $2$-nd and $3$-rd row of the $i$-th block). This is
impossible (because of assumption {\bf A2}) and hence equation~\eqref{enko2} has value $0$.
We conclude that $\det(\hat{A}) = \det(\hat{B})$.

Now $\hat{A}$ has rank at most $2m$, since $\sigma(A)$ has rank $2m$, but dimension $(2m+1)\times (2m+1)$, so its columns are linearly dependent, and we conclude that
$$
0 = \det(\hat{A}) = \det(\hat{B})
$$
and hence the $\ell$-th equation is satisfied by the assignment $\sigma$, for all $\ell\in [m]$. Thus $\sigma$ is a
solution of $S$ in~$\fd{F}$.
\end{proof}

\subsection{A Tensor Rank Problem}

We are left with translating the minrank problem from the previous section into a tensor rank problem. Recall that given
a quadratic system $S$ we constructed a matrix $A$ consisting of diagonal blocks (with constants and variable terms) and additional, affine entries in rows and columns divisible by $3$.

Define a tensor $T_A$ from $A$ as follows:
\begin{itemize}
  \item for every variable $x$ in $A$ let the partial derivative $A_x := \partial A/\partial x$ be a (frontal) slice of $T$; $\partial A/\partial x$  is the matrix containing the coefficients of $x$ in $A$,
  \item add one final, that is, $(n+1)$-st (frontal) slice $A_1$ containing all the constant values of $A$.
\end{itemize}

Note that if $\sigma$ assigns a value in $\fd{F}$ to each variable in $A$, then $\sigma(A) = A_1 + \sum_{x} \sigma(x) A_x$.
Let $n$ be the number of variables in $A$; $T_A$ is a $3m \times 3m \times (n+1)$ tensor.

\begin{lemma}\label{lem:trans}
 $A$ has minrank at most $2m$ if and only if $T_A$ has tensor rank at most $2m+n$.
\end{lemma}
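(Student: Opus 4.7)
The tensor $T_A$ admits the slice decomposition
\[
  T_A \;=\; A_1 \otimes e_{n+1} \;+\; \sum_{i=1}^n A_{x_i} \otimes e_i,
\]
and Observation~\ref{bobo} shows that each partial-derivative slice is itself rank~$1$: for a main variable $u$ one has $A_u = \mathbf{1}_{R_u} \otimes \mathbf{1}_{C_u}$, and for an auxiliary variable $u_j$, the matrix $A_{u_j}$ is supported on the single row $R_u[j]$. Thus, for the ``if'' direction, given an assignment $\sigma$ with $\rk_\fd{F}(\sigma(A)) \le 2m$, I would write $\sigma(A) = \sum_{\ell=1}^{2m} p_\ell \otimes q_\ell$ and use $A_1 = \sigma(A) - \sum_i \sigma(x_i) A_{x_i}$ to rewrite the decomposition as
\[
  T_A \;=\; \sum_{\ell=1}^{2m} p_\ell \otimes q_\ell \otimes e_{n+1} \;+\; \sum_{i=1}^n A_{x_i} \otimes \bigl(e_i - \sigma(x_i)\, e_{n+1}\bigr),
\]
which is a rank-$(2m+n)$ expansion of $T_A$.

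For the converse I would first argue that the $n$ rank-$1$ slices $A_{x_1},\ldots,A_{x_n}$ are linearly independent. At position $(R_u[j],C_u[j])$ the only variable of $A$ appearing is $u$ (with coefficient $1$), so any linear dependence must give coefficient $0$ to each $A_u$; then the entry $u-u_j$ at $(R_u[j],C_u[k])$ for $k \ne j$ forces coefficient $0$ on each $A_{u_j}$. Assuming $T_A$ has tensor rank $r \le 2m+n$, Lemma~\ref{lem:H2} applied to these $n$ rank-$1$ slices (placed at third-mode positions $1,\ldots,n$) yields a rank-$r$ expansion
\[
  T_A \;=\; \sum_{\ell=1}^n A_{x_\ell} \otimes w_\ell \;+\; \sum_{\ell=n+1}^r u_\ell \otimes v_\ell \otimes w_\ell.
\]

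Reading off the $(n+1)$-th frontal slice of this expansion gives
\[
  A_1 \;=\; \sum_{\ell=1}^n w_\ell(n+1)\, A_{x_\ell} \;+\; \sum_{\ell=n+1}^r w_\ell(n+1)\, u_\ell \otimes v_\ell,
\]
so, setting $\sigma(x_\ell) := -w_\ell(n+1)$ for $\ell \in [n]$, I obtain
\[
  \sigma(A) \;=\; A_1 \;-\; \sum_{\ell=1}^n w_\ell(n+1)\, A_{x_\ell} \;=\; \sum_{\ell=n+1}^r w_\ell(n+1)\, u_\ell \otimes v_\ell,
\]
a matrix of rank at most $r-n \le 2m$. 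Hence $\mrk_\fd{F}(A) \le 2m$, completing the converse.

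The plan is essentially a linear-algebraic manipulation once the structural observations are in place, so the main obstacle is really the two bookkeeping steps in the middle: (i)~that each partial-derivative slice $A_{x_i}$ of $T_A$ is non-zero and rank~$1$, and (ii)~that the collection $\{A_{x_i}\}$ is linearly independent. Both rely on the precise construction of $A$ in the previous subsection, and in particular on assumption~\textbf{A1} which ensures that each variable's support is a single $R_u \times C_u$ rectangle. After that, Lemma~\ref{lem:H2} does exactly what is needed---it realizes the $n$ rank-$1$ slices as the matrix parts of $n$ summands, leaving the remaining $r-n$ rank-$1$ terms free to absorb $\sigma(A)$; the first- and second-mode vectors of those terms are irrelevant, only their weights in the final frontal slice matter.
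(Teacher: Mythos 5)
Your proposal is correct and follows essentially the same route as the paper: the forward direction absorbs $\sigma(A)$ into $2m$ rank-$1$ terms while pairing each slice $A_{x_i}$ with a third-mode vector, and the converse establishes linear independence of the slices $A_{x_i}$, invokes Lemma~\ref{lem:H2}, and reads off the constant slice to recover an assignment $\sigma$ with $\rk_\fd{F}(\sigma(A)) \le 2m$. Your write-up is in fact slightly more explicit than the paper's (e.g.\ pinning the dependence argument to the diagonal positions $(R_u[j],C_u[j])$ and writing out the mode-3 vectors $e_i - \sigma(x_i)e_{n+1}$), but the underlying argument is identical.
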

\begin{proof}
 If $A$ has minrank $2m$, then there is a $\sigma$ assigning $\sigma(x) \in \fd{F}$ to each variable $x$ occurring in $A$
 so that the rank of $\sigma(A)$ is $2m$. Now $\sigma(A) = A_1 + \sum_x \sigma(x) A_x$, where the sum is over all $n$ variables $x$ occurring in $A$.
 In other words, $A_1 = \sigma(A) + \sum_x (-A_x)$. Since $\sigma(A)$ has matrix rank $2m$, it can be written as the sum of $2m$ rank-$1$ matrices, so
 $A_1$ can be written as the sum of $2m+n$ rank-$1$ matrices---each $A_x$ has rank $1$. Hence, every slice of $T_A$ can be written using the $A_x$ and the $2m$ rank-$1$ matrices summing up to $A_1$, implying that $T_A$ has tensor rank at most $2m+n$.

 For the other direction, assume that $T_A$ has tensor rank at most $2m+n$.
 We first observe that the $n$ matrices $A_x$ are linearly independent: Suppose that
 $\sum_x \lambda(x) A_x = 0$ for some vector $\lambda$. The matrix $A$ contains two types of variables: the original
 variables $u$ (from the quadratic system), and the additional variables $u_j$. Now any non-zero entry in $A_u$ is unique in the sense that no other $A_x$ has an entry in the same position, so $\lambda(u) = 0$ for the original variables. But then any non-zero entry in $A_{u_j}$ is unique among the remaining matrices (belonging to the non-original variables), so $\lambda(u_j) = 0$ for all remaining variables, establishing $\lambda = 0$. Therefore, the $A_x$ are linearly independent.

 Lemma~\ref{lem:H2} now implies that $T_A$ can be written using the $A_x$ and $2m$ additional rank-$1$ tensors. So
 $T_A = \sum_x A_x \otimes z_x + \sum_{i=1}^{2m} u_i \otimes v_i \otimes w_i$, and,
 in particular, looking at the $n+1$st frontal slice of $T_A$, which is $A_1$, we obtain
 \[A_1 = \sum_{x} \tau(x) A_x + \sum_{i=1}^{2m} B_i,\]
 where $\tau(x) = z_x[n+1]$, and $B_i = w_i[n+1] (u_i \otimes v_i)$, where the $B_i$ are rank-$1$ matrices. In other words, $A_1 - \sum_{x} \tau(x) A_x = \sum_{i=1}^{2m} B_i$ has matrix rank at most $2m$.
 Setting $\sigma(x) := - \tau(x)$ we have that $A_1 + \sum_{x} \sigma(x) A_x$ has rank $2m$, and, moreover, equals $\sigma(A)$. But this shows that the minrank of $A$ is at most $2m$, which is what we had to prove.
\end{proof}

The following is a well-known result. For more results on tensor rank over various rings, see Howell~\cite{H78}.

\begin{corollary}\label{ex:sep}
 There is a tensor $T$ with $\rk_{\QN}(T) > \rk_{\RN}(T)$.
\end{corollary}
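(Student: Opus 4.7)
The plan is to combine Lemma~\ref{pako1} and Lemma~\ref{lem:trans} applied to the explicit quadratic system already laid out in Example~\ref{ex:sep}. Take $S=\{u=xy,\ y=x,\ u=2\}$, which encodes $x^{2}=2$, and let $A$ be the $9\times 9$ matrix displayed in~\eqref{exot}; set $T:=T_{A}$. The only arithmetic input I need is that $x^{2}=2$ is solvable over $\RN$ but not over $\QN$.

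First I would invoke Lemma~\ref{pako1} separately over the two fields. Over $\RN$, assigning $x=y=\sqrt{2}$, $u=2$ (and the auxiliary $u_{j},x_{j},y_{j}$ to match $u,x,y$) satisfies every equation of $S$, so $\mrk_{\RN}(A)=2m=6$. Over $\QN$ the equation $x^{2}=2$ has no solution, so Lemma~\ref{pako1} forbids $\mrk_{\QN}(A)=2m$; combined with the universal lower bound $\mrk_{\QN}(A)\geq 2m$ from Observation~\ref{bobo} (part 4), this gives $\mrk_{\QN}(A)\geq 2m+1=7$.

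Finally I would apply Lemma~\ref{lem:trans} to $T$ over each field. Let $n$ be the number of variables of $A$. From $\mrk_{\RN}(A)\leq 2m$ the lemma yields $\rk_{\RN}(T)\leq 2m+n$, while from $\mrk_{\QN}(A)>2m$ the lemma yields $\rk_{\QN}(T)>2m+n$. Chaining these gives $\rk_{\QN}(T)>\rk_{\RN}(T)$, as required. I do not foresee any genuine obstacle: Lemma~\ref{lem:trans} is proved uniformly over an arbitrary field, so the same integer-valued tensor $T_{A}$ can legitimately be analyzed over both $\QN$ and $\RN$, and the content of the corollary has been reduced to the irrationality of $\sqrt{2}$.
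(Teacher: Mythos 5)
Your proposal is correct and is essentially the paper's own proof: it takes the matrix $A$ of Example~\ref{ex:sep}, uses Lemma~\ref{pako1} to get $\mrk_{\RN}(A)=6$ and $\mrk_{\QN}(A)\geq 7$ from the (in)solvability of $x^2=2$, and then applies Lemma~\ref{lem:trans} (in both directions of the equivalence) to conclude $\rk_{\QN}(T_A)\geq 16 > 15 \geq \rk_{\RN}(T_A)$.
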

\begin{proof}
 Let $A$ be the matrix from Example~\ref{ex:sep}, and consider the tensor $T_A$ constructed in Lemma~\ref{lem:trans}. Then $\rk_{\QN}(T_A) \geq 7 + 9= 16$, while $\rk_{\RN}(T_A) = 6+9 = 15$.
\end{proof}

We can now complete the proof of our main result.

\begin{proof}[Proof of Theorem~\ref{thm:trans}]
Lemmas~\ref{lem:BFS} and~\ref{lem:QE} allow us to translate $\varphi$ into a quadratic system $S$ so that $\varphi$ is true over $\fd{F}$ if and only if $S$ has a solution over $\fd{F}$. Lemma~\ref{pako1} translates $S$ into a minrank problem over a matrix $A$, and Lemma~\ref{lem:trans} turns that into a tensor rank problem over $\fd{F}$.
\end{proof}

\subsection{Universality}

Reviewing the hardness proofs carefully shows that they also yield algebraic universality. Let us start with the minrank problem:

\begin{corollary}\label{cor:minrankuni}
    For every algebraic set $V$ specified using integer coefficients, we can find a matrix $A$ whose entries are multilinear expressions in $\fd{F}[x_1, \ldots, x_m]$, and an integer $k$ so that $V$ is stably equivalent to $\{(x_1, \ldots, x_d): \mrk_\fd{F}(A) = k\}$.
\end{corollary}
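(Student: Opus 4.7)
The plan is to retrace the hardness reduction leading to Lemma~\ref{pako1} and verify that each individual step is in fact a stable equivalence (rather than just a polynomial-time reduction), so that the composition yields the desired universality statement.

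Begin with an algebraic set $V \subseteq \fd{F}^d$ cut out by polynomials $p_1, \ldots, p_s \in \ZN[x_1, \ldots, x_d]$. Use the straight-line-program encoding of Lemmas~\ref{lem:BFS} and~\ref{lem:QE} to unfold the $p_i$ into a quadratic system $S$ in enlarged variables $(x_1, \ldots, x_N)$. Every auxiliary variable introduced is defined by a single equation of the form $x_{\mathrm{new}} = x_i + x_j$, $x_{\mathrm{new}} = x_i x_j$, $x_{\mathrm{new}} = x_i$, or $x_{\mathrm{new}} = K$, which fits the template $\langle p_i(y), y'\rangle = c_i$ of a stable projection (the new variable enters linearly, while the old variables furnish the polynomial $p_i$). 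Hence the solution space of $S$ is stably equivalent to $V$. Enforcing assumptions A1, A2, A3 only adds copy variables via linear equations $v' = v$, which are again stable projections.

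Now apply the construction of Lemma~\ref{pako1} to build the matrix $A$, and set $k = 2m$, where $m$ is the number of equations in $S$. The variables of $A$ consist of the system variables $x_1, \ldots, x_N$ together with the off-diagonal auxiliaries $u_j$. Let
\[\mathcal{M} := \{\sigma : \mathrm{rank}(\sigma(A)) \leq 2m\}.\]
Lemma~\ref{pako1} already supplies both containments at the level of the original variables: the projection of $\mathcal{M}$ onto $(x_1, \ldots, x_N)$ lies in the solution space of $S$, and every solution extends into $\mathcal{M}$ via $\sigma(u_j) := \sigma(u)$. To upgrade this correspondence into a stable equivalence, I need to show that every $\sigma \in \mathcal{M}$ satisfies $\sigma(u_j) = \sigma(u)$ for each auxiliary pair, so that the fibers of the projection become single points cut out by linear equations -- precisely the shape of a stable projection.

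The main obstacle is establishing this forced equality. The strategy is: for each pair $(u, j)$ with $|R_u| \geq 2$, pick some $k \neq j$ and consider the $(2m+1) \times (2m+1)$ submatrix of $\sigma(A)$ obtained by adjoining the row $R_u[j]$ and the column $C_u[k]$ to the $2m \times 2m$ identity block from Observation~\ref{bobo}(4). Expanding the resulting determinant along the identity portion and using assumption A2 to kill the cross terms (in the same spirit as the vanishing of~(\ref{enko2}) in the proof of Lemma~\ref{pako1}), the surviving contribution is, up to sign, a monomial in the diagonal-block entries times $\sigma(u) - \sigma(u_j)$. Since $\mathrm{rank}(\sigma(A)) \leq 2m$ forces every such minor to vanish, we conclude $\sigma(u_j) = \sigma(u)$. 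Combined with the block-determinant analysis of Lemma~\ref{pako1}, which enforces the quadratic equations, this yields
\[\mathcal{M} \;=\; \{(x, u) : x \text{ solves } S,\ u_j = u \text{ for every auxiliary pair}\},\]
which is stably equivalent to the solution space of $S$, and therefore to $V$.
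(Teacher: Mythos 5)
Your overall route is the same as the paper's: pass from $V$ to the quadratic system $S$, and then claim that any assignment $\sigma$ witnessing $\mrk_\fd{F}(A)=2m$ must satisfy $\sigma(u_j)=\sigma(u)$ for every copy variable, so that the minrank space projects to the solution space of $S$ with single, affinely cut-out fibers. The paper's proof merely asserts this forcing step; you try to prove it, and this is exactly where the gap sits. First, a technical problem: the recipe ``adjoin row $R_u[j]$ and column $C_u[k]$ to the $2m\times 2m$ identity block'' is often unavailable, because by the discussion after Observation~\ref{bobo} only one of $R_u$, $C_u$ is guaranteed to consist of indices divisible by $3$; when $R_u[j]$ (say) is not divisible by $3$ it is already an identity row and cannot be adjoined. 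The minors actually at your disposal are the entries of the Schur complement: permuting $\sigma(A)$ into the form $\bigl(\begin{smallmatrix} I & B\\ C & D\end{smallmatrix}\bigr)$ with $I$ the $2m\times 2m$ identity, one has $\rk\sigma(A)\le 2m$ if and only if $D=CB$, and each entry of $D-CB$ is the determinant obtained by adjoining the third row of one block and the third column of another.

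Second, and more seriously, these minors do not force $\sigma(u_j)=\sigma(u)$: as you yourself observe, the coefficient multiplying $\sigma(u)-\sigma(u_j)$ is a monomial in diagonal-block entries, and those entries include \emph{variables}, which can vanish at a solution of $S$. Concretely, encode $c=a^2$ by $S=\{c=a\,a',\ a'=a'',\ a''=a\}$ (this satisfies A1--A3). Writing out $D=CB$ for the resulting $9\times9$ matrix gives, besides the three equations of $S$ and the forced equalities $a'_1=a'$, $a_2=a$, $a''_2=a''$, the conditions
\[
a'(a-a_1)=0,\qquad a(a'-a'_2)=0,\qquad (a'-a'_2)(a-a_1)+a''-a''_1=0 .
\]
Over the solution $a=a'=a''=c=0$ the first two are vacuous and the third only says $a''_1=a_1a'_2$, so the fiber is two-dimensional (and not even cut out by conditions linear in the auxiliary variables), whereas over solutions with $a\neq 0$ the fiber is a single point. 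Thus $\{\sigma:\rk\sigma(A)\le 2m\}$ is \emph{not} the set ``solutions of $S$ with $u_j=u$'', and the single-point-fiber stable projection you invoke (and which the paper asserts without argument) breaks down at degenerate solutions; note Lemma~\ref{pako1} itself is unaffected, since it only uses the diagonal conditions. To close the gap you would need either to restrict to the entries whose multiplier is a block constant $\pm1$ and treat the remaining auxiliaries differently, or to modify the construction (or the statement of the corollary) so that the variables appearing as multipliers cannot vanish on $V$.
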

\begin{proof}
    Suppose we are given an algebraic set $V = \{(x_1, \ldots, x_d) \in \fd{F}^d: p_1(x_1, \ldots, x_d) = \cdots = p_n(x_1, \ldots, x_d) = 0\}$.
    We transform the system $p_1(x_1, \ldots, x_d) = \cdots p_n(x_1, \ldots, x_d) = 0$ into a quadratic system $S$ (as in Lemma~\ref{lem:QE}). While $S$ may require additional variables, each of these is equal to a polynomial transformation of the $x_i$ so that the realization space of $S$ is stably equivalent to the original algebraic set $V$ (in this case via a rational transformation).
    In the next step, we turn
    $S$ into a matrix $A$ with multilinear expressions over $x_1, \ldots, x_m$, and an integer $k$ as in Lemma~\ref{pako1} so that $S$ is solvable if and only if $\mrk_\fd{F}(A) = k$. Moreover, the variables of $S$ are variables of $A$, though $A$ may contain additional variables. However, those, as before, equal existing variables when $\mrk_\fd{F}(A) = k$, so $S$ is stably equivalent to
     $\{(x_1, \ldots, x_d): \mrk_\fd{F}(A) = k\}$, and then, by transitivity, so is $V$.
\end{proof}

In other words, the minrank problem for matrices with multilinear expressions over a field is universal for algebraic sets over that field. This gives us universality of the tensor problem as well.

\begin{corollary}\label{cor:tensoruni}
 For every algebraic set $V$ we can find a tensor $T$, an integer $r$, and a family of $s$ rank-$1$ matrices $S$ 
 so that $V$ is stably equivalent to the realization space ${\cal R}(T, r, S, s)$.
\end{corollary}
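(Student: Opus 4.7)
The plan is to compose Corollary~\ref{cor:minrankuni} (minrank universality) with Lemma~\ref{lem:trans} (minrank-to-tensor-rank), tracking the realization space through each stage and invoking transitivity of stable equivalence.

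Given an algebraic set $V$, I will first apply Corollary~\ref{cor:minrankuni} to obtain a matrix $A$ with multilinear entries in variables $x_1, \ldots, x_n$ and an integer $k$ so that $V$ is stably equivalent to the minrank realization $\{\sigma \in \fd{F}^n : \mrk_\fd{F}(A) \le k\}$. Next, I form the tensor $T_A$ from Lemma~\ref{lem:trans}, whose slices are the constant slice $A_1$ together with the rank-$1$ slices $A_x = \partial A/\partial x$, one per variable. I then take $r = n+k$ and let $S=(A_{x_1},\ldots,A_{x_n})$; the resulting realization space (in the spirit of the ${\cal R}(T, r, h)$ variant with $h=n$ discussed in the paragraph preceding the corollary) forces these $n$ rank-$1$ slices to appear as the first $n$ terms of every rank-$r$ expansion, which is justified by H\aa stad's Lemma~\ref{lem:H2}.

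This realization space decomposes slice by slice. The linear independence of the $A_x$'s (established inside the proof of Lemma~\ref{lem:trans}) forces the coefficient of $A_{x_j}$ in the expansion of any other slice $A_{x_\ell}$ to be $\delta_{j\ell}$, leaving only the slice $A_1$ as a genuine constraint: setting $\sigma(x_j) := -w_j(1)$, the equation for the $A_1$-slice rearranges into $\sigma(A) = \sum_{i=1}^{k} c_i M_i$, where the $M_i$ are the unconstrained rank-$1$ matrices and the $c_i$ are their $A_1$-coordinates. Projecting $(\sigma,(c_i,M_i))$ to $\sigma$ then sends the tensor realization onto the minrank realization of $A$, the fibre over each $\sigma$ being the space of rank-$k$ decompositions of $\sigma(A)$.

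The main obstacle will be to verify that this projection is a \emph{stable} projection, not merely a set-theoretic surjection or a homotopy equivalence: stable equivalence demands that the fibre be described by equations that are linear in the fibre variables with polynomial coefficients in $\sigma$. Pinning the $n$ slices $A_{x_j}$ is precisely the mechanism that achieves this—and, as the paragraph preceding the corollary emphasises, it is also the reason the result is stated for the restricted realization space rather than for ${\cal R}(T,r)$: the H\aa stad lemmas by themselves do not preserve homotopy type, let alone stable equivalence. Once the stable-projection step is in place, transitivity of stable equivalence combines with the rational equivalence supplied by Corollary~\ref{cor:minrankuni} to yield the desired equivalence between $V$ and ${\cal R}(T_A, S, r)$.
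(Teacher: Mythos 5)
There is a genuine gap, and it sits exactly where you flag the ``main obstacle.'' The corollary is stated for ${\cal R}(T,S,r)$ as defined in the paper, where $S$ is a family of \emph{all $r$} rank-$1$ matrices and the realization space consists only of the vectors $(w_1,\ldots,w_r)$; membership is then a system of conditions \emph{linear} in the $w$'s, which is what makes the final ``remove the remaining affine conditions by a stable projection'' step legitimate. The paper's proof therefore has to exhibit all $r=2m+n$ matrices explicitly: the $n$ coefficient slices $A_{x_i}$ \emph{and} $2m$ further fixed rank-$1$ matrices $B_i$ coming from the first and second columns of each $3\times 3$ block of $A$. Your construction pins only the $n$ slices $A_{x_j}$ (your $S$ has $n$ members, not $r$, so it does not even match the definition of ${\cal R}(T,S,r)$) and leaves $2m$ rank-$1$ terms $M_i$ free. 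That choice breaks the argument in two places. First, the claimed slice-by-slice forcing of coefficients $\delta_{j\ell}$ is not valid: Lemma~\ref{lem:H2} only produces \emph{some} expansion containing the $A_{x_j}$, whereas in a realization space you must handle \emph{every} point; a free $M_i$ may itself lie in the span of the $A_{x_j}$ (e.g.\ $M_i=A_{x_1}$), and then the coefficients $w_j[\ell]$ are not determined. Second, and more fundamentally, the fibre of your projection over a fixed $\sigma$ is the variety of rank-$\le 2m$ decompositions of $\sigma(A)$ (together with the compatible coordinates of the $w$'s); this is not an affine subspace cut out by equations linear in the fibre variables, so the map is not a stable projection in the paper's sense. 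Over $\RN$ it is not even a homotopy equivalence: each free term carries scaling symmetries $(u_i,v_i,w_i)\mapsto(\lambda u_i,\mu v_i,(\lambda\mu)^{-1}w_i)$ with $\lambda,\mu\in\RN\setminus\{0\}$, so the fibres are disconnected and your space has more connected components than $V$; since stable equivalence preserves homotopy type, the object you build is in general \emph{not} stably equivalent to $V$. Pinning the $n$ slices does not cure this, because the obstruction comes entirely from the unpinned $2m$ terms.

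The repair is the step your proposal omits and which is the real content of the paper's proof: identify, from the block structure of $A$ produced by Lemma~\ref{pako1}, the $2m$ additional fixed rank-$1$ matrices $B_i$ and put them into $S$ alongside the $A_{x_i}$, so that $S$ has exactly $r=2m+n$ members. Then ${\cal R}(T,S,r)$ lives only in the $w$-variables, the slice $A_1=t_{::\,n+1}$ yields $A_1=\sum_i w_i[n+1]S_i$ and hence recovers $\sigma(x_i)=-w_i[n+1]$, and all remaining conditions on the $w$'s are affine with coefficients determined by the fixed $S_i$, so they can be eliminated by a genuine stable projection. The rest of your outline (composing Corollary~\ref{cor:minrankuni} with Lemma~\ref{lem:trans} and using transitivity) matches the paper, but without fixing the full family $S$ the stable-equivalence claim does not go through.
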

\begin{proof}
    By Corollary~\ref{cor:minrankuni}, the algebraic set $V$ is stably equivalent to a minrank problem $\mrk_\fd{F}(A) = k_A$ for matrix $A$ and $k_A$ as constructed in the proof of Lemma~\ref{pako1}. From $A$ we construct a $3m \times 3m \times (n+1)$ tensor $T$ and an integer $k = 2m+n$, as in Lemma~\ref{lem:trans}, so that $V \neq \emptyset$ if and only if the tensor rank of $T$ is at most $k$. We know what the potential basis for $T$ looks like: it consists of the $n$ matrices $A_{x_i}$, the coefficient matrix of $x_i$, and $2m$ matrices $B_i$, two for each of the $m$ blocks in the minrank problem (keeping first and second column in each block). As in the proof of Lemma~\ref{lem:trans} we can argue that the $A_{x_i}$
    occur in the basis, since they are linearly independent. Define $S_i = A_{x_i}$, for $1 \leq i \leq n$.

    Consider an element of the realization space
    \begin{align*}
       {\cal R}(T, k, S, n) := & \{(w_1, \ldots, w_n): \exists (u_{n+1}, v_{n+1}, w_{n+1}, \ldots, u_k, v_k, w_k)  \\
                               & T = \sum_{\ell = 1}^n S_{\ell} \otimes w_{\ell}+  \sum_{\ell = n+1}^k u_{\ell} \otimes v_{\ell} \otimes w_{\ell}\},
    \end{align*}
    where $k = 2m+n$.
    Recall that $(t_{::n+1}) = A_1$, the matrix of constants from the minrank problem, so
    \begin{equation}\label{eq:A1}
    A_1 = \sum_{i = 1}^n w_i[n+1] A_{x_i} + \sum_{i=n+1}^{k} w_i[n+1] (u_i \otimes v_i).
    \end{equation}
    This implies, as we argued in Lemma~\ref{lem:trans}, that $-w_i[n+1]$, for $1 \leq i \leq n$, is the value of $x_i$ in a solution $x = (x_1, \ldots, x_n)$ of the minrank problem. To prove stable equivalence, we have to show that the remaining $n$ slices of $w_i$ can be determined as well. As the second claim below shows they are even constant.
    We claim that
    \begin{itemize}
     \itemi $w_i[\ell] = 0$ for all $1 \leq i \leq n$ and $n+1 \leq \ell \leq k$; and
     \itemii $w_i[\ell] = \delta_{i\ell}$ for $1 \leq i, \ell \leq n$, where $\delta_{i\ell}$ is the Kronecker $\delta$.
     \end{itemize}
    To see claim $(i)$, we rewrite Equation~\eqref{eq:A1} as
    \[ A_1 - \sum_{i = 1}^n w_i[n+1] A_{x_i} = \sum_{i=n+1}^{k} w_i[n+1] (u_i \otimes v_i).\]
    Dropping every third row and column leaves us with the $2m \times 2m$ identity matrix on the left-hand side, so
     $u_{n+1}\otimes v_{n+1}, \dots, u_{k}\otimes v_{k}$
     must be linearly independent. The $\ell$-th slice of $T$, for $1 \leq \ell \leq n$, is
     \begin{equation}\label{eq:Axell}
      A_{x_{\ell}} = \sum_{i = 1}^n w_i[\ell] A_{x_i} + \sum_{i=n+1}^{k} w_i[\ell] (u_i \otimes v_i),
      \end{equation}
    Rewriting as before
     \[ A_{x_{\ell}} - \sum_{i = 1}^n w_i[\ell] A_{x_i} = \sum_{i=n+1}^{k} w_i[\ell] (u_i \otimes v_i),\]
     and again dropping every third row and column leaves
    us with the null matrix on the left-hand side, which, by independence of the $u_i \otimes v_i$, implies that $w_i[\ell] = 0$ for
    all $n+1 \leq \ell \leq k$, proving $(i)$.

    Claim $(ii)$ now follows by using claim $(i)$ in Equation~\eqref{eq:Axell} to obtain that
     \[ A_{x_{\ell}} = \sum_{i = 1}^n w_i[\ell] A_{x_i}.\]
     Since the $A_{x_i}$ are independent (as we argued in the proof of Lemma~\ref{lem:trans}),
     this implies that $w_i[\ell] = \delta_{i\ell}$ for all $1 \leq i \leq n$ and $1\leq \ell \leq n$.

    We conclude that ${\cal R}(T, k, S, n)$ is stably equivalent to the minrank problem, and, thus, to $V$.
 \end{proof}

\section{Open Questions}

There are several natural follow-up questions suggested by the results of this paper. For example, what is the complexity of tensor rank for symmetric tensors? Is tensor-rank hard for a fixed rank ($2$ or $3$ even) or is it fixed-parameter tractable? Over the complex numbers, Koiran's result places the problem at the second level of the polynomial hierarchy assuming the Generalized Riemann hypothesis is true. With the recent successes of exact algorithms for \NP-hard problems, is there a way to make Koiran's result algorithmic? Is there a way to remove the assumption?

\bibliographystyle{plain}
\bibliography{tensor}

\end{document}